\newtheorem{remark}{Remark}
\newtheorem{theorem}{Theorem}
\newtheorem{lemma}{Lemma}
\newtheorem{corollary}{Corollary}
\def\expandafter\normalsize\expandafter{%
    \normalsize%
    \setlength\abovedisplayskip{4pt}%
    \setlength\belowdisplayskip{4pt}%
    \setlength\abovedisplayshortskip{2pt}%
    \setlength\belowdisplayshortskip{2pt}%
}
\title{Optimized Power Control for Multi-User Integrated Sensing and Edge AI}
\author{
    \IEEEauthorblockN{Biao Dong and Bin Cao\\
\IEEEauthorblockA{Harbin Institute of Technology, Shenzhen, China
\\E-mail: 23b952012@stu.hit.edu.cn and caobin@hit.edu.cn}}
\vspace{-0.7cm}
}
\begin{document}

\maketitle

\begin{abstract}
This work investigates an \emph{integrated sensing and edge artificial intelligence} (ISEA) system, where multiple devices first transmit probing signals for target sensing and then offload locally extracted features to the \emph{access point} (AP) via analog \emph{over-the-air computation} (AirComp) for collaborative inference. To characterize the relationship between AirComp error and inference performance, two proxies are established: the \emph{computation-optimal} proxy that minimizes the aggregation distortion, and the \emph{decision-optimal} proxy that maximizes the inter-class separability, respectively. Optimal transceiver designs in terms of closed-form power allocation are derived for both \emph{time-division multiplexing} (TDM) and \emph{frequency-division multiplexing} (FDM) settings, revealing threshold-based and dual-decomposition structures, respectively. Experimental results validate the theoretical findings.

\end{abstract}


\section{Introduction}

To support latency-sensitive applications such as smart factories, digital twins, and the low-altitude economy, the \emph{sixth generation} (6G) of wireless networks is expected to extend its functional scope by integrating sensing capabilities, thereby giving rise to the paradigm of \emph{integrated sensing and communication} (ISAC) \cite{liu2022integrated}. In this fashion, massive sensory data needs to be processed in real time, which naturally involves three tightly coupled processes, namely sensing, communication, and computation, as illustrated in Fig.~\ref{1Multi_User_Task_ISCC_intro}. From the computational perspective, intelligent computation is evolving toward a task-oriented paradigm, where the computation is directly aligned with specific tasks, such as classification and detection. This paradigm requires edge \emph{artificial intelligence} (AI) to enable collaborative computation, and is therefore referred to as \emph{integrated sensing and edge AI} (ISEA) \cite{liu2025integrated}.

An efficient ISEA system relies on two key aspects: system-level optimization and sensing-level enhancement \cite{liu2025integrated}. The former focuses on the joint optimization of sensing, communication, and computation processes, giving rise to a rich set of relevant techniques such as model partitioning \cite{yao2025energy}, time-slot and power allocation \cite{he2023integrated}, and transceiver cooperative design \cite{feres2023over}. The latter aims to improve data quality from the sensing source through joint optimization of sensing error\cite{wen2023task} and multi-view sensing aggregation \cite{chen2024view}. However, the performance improvement from both aspects is fundamentally constrained by the frequent transmission of high-dimensional features, which results in a severe communication bottleneck. 

\begin{figure}[!t]    
	\centering
	{\includegraphics[width=0.4\textwidth]{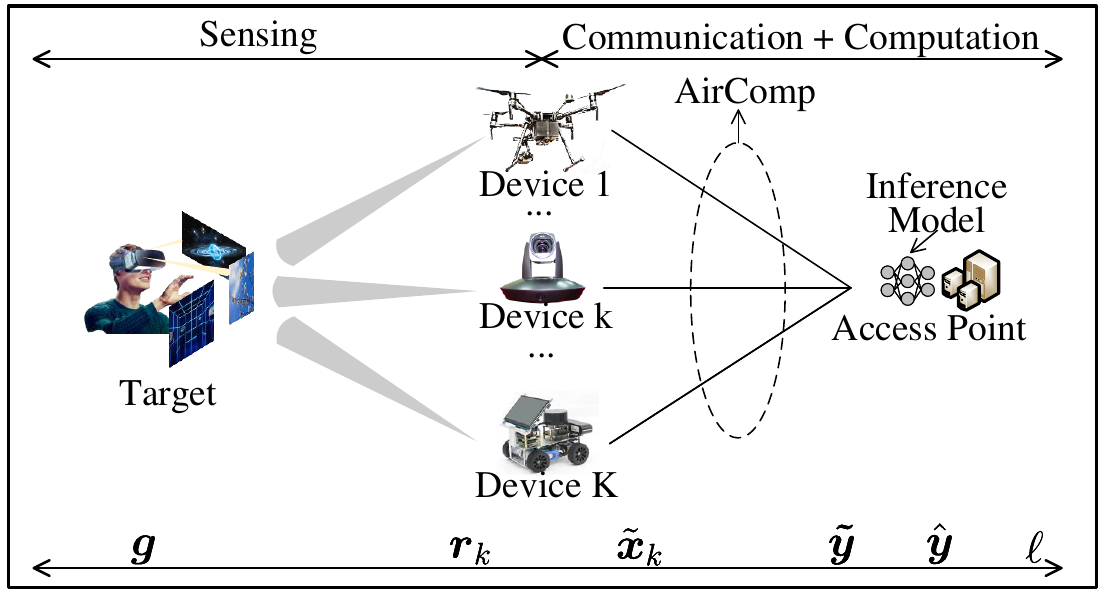}}
	\caption{The considered ISEA system consists of a single common target and one AP. Multiple devices first transmit probing signals for target sensing and then offload locally extracted features to the AP by AirComp.}
    \label{1Multi_User_Task_ISCC_intro}
    \vspace{-0.2cm}
\end{figure}

As a promising multiple-access technique, \emph{over-the-air computation} (AirComp) integrates multi-access transmission with nomographic functional computation (e.g., sum and averaging) by exploiting the signal superposition property of wireless multiple-access channels \cite{csahin2023survey,cao2020optimized,liu2020over}.
This enables AirComp to overcome the scalability bottleneck of conventional orthogonal multiple-access schemes, where radio resources are divided among users. Recent work has partially investigated AirComp for data aggregation in ISEA systems. Specifically, \cite{chen2024view} compared AirComp with orthogonal access schemes, \cite{liu2023over} investigated an AirComp-enabled max-value computation, and \cite{wen2023task} proposed a task-oriented AirComp framework for ISEA, where the \emph{Mahalanobis distance} (MD) between two classes is adopted as the discriminant metric. AirComp for ISEA is still in its infancy, and most existing studies focus on algorithmic design or proxy-based optimization of task performance \cite{chen2024view,csahin2023survey,liu2023over,wen2023task,cao2020optimized,liu2020over}. However, these schemes remain suboptimal, as they lack a unified framework to characterize the relationship between task performance and AirComp-induced errors. In particular, the objective of a task-oriented paradigm is not to reconstruct the raw sensory data but to preserve the task-relevant information embedded within it. Conventional computation-optimal AirComp designs \cite{chen2024view,csahin2023survey,liu2023over,cao2020optimized,liu2020over}, which overlook task characteristics, inevitably suffer performance degradation under adverse channel conditions. Meanwhile, even with the introduction of task-aware proxies (e.g., MD in \cite{wen2023task}), the performance gap between proxy-optimal and computation-optimal designs remains unclear.

Motivated by the aforementioned perspectives, this paper investigates a multi-user ISEA system with the objective of maximizing inference performance, as shown in Fig.~\ref{1Multi_User_Task_ISCC_intro}.
Here, inference refers to determining the category of a given sample and represents a task-oriented classification process.
The key contributions of this work are summarized as follows.
We first establish a theoretical relationship between the AirComp error and inference performance, providing a formal justification for two performance proxies, termed the computation-optimal and decision-optimal proxies. Optimal transceiver designs in terms of closed-form power allocation are derived for both \emph{time-division multiplexing} (TDM) and \emph{frequency-division multiplexing} (FDM) settings, revealing threshold-based and dual-decomposition structures. Extensive experimental results validate the theoretical findings.



\section{System Model} \label{sec:system_model}
Consider a multi-user ISEA system, as illustrated in Fig.~\ref{1Multi_User_Task_ISCC_intro}.
A set of $K$ distributed ISAC devices/users observe a common target for multi-view sensing. Each device locally extracts features from its observation and uploads them over a multi-access channel to the \emph{access point} (AP), which aggregates the features to enable collaborative inference. 
Key models and assumptions are detailed in the following subsections.

\subsection{Sensing Model}
\subsubsection{Sensing Process}Each device, say device $k$, employs \emph{frequency-modulated continuous-wave} (FMCW) modulation for target sensing, and the observation is denoted as \cite{wen2023task,yao2025energy}, 
\begin{equation}\label{Eq:receivedr_k}
    \boldsymbol{r}_k = \boldsymbol{g} + \boldsymbol{z}_k,
\end{equation}
where $\boldsymbol{g} \in \mathbb{C}^{F}$ represents the ground-truth sensory data of the target, and $\boldsymbol{z}_k \sim \mathcal{CN}(\boldsymbol{0}, \sigma_{k}^2 \mathbf{I}_{{F}})$ denotes the AWGN. Here, $F$ denotes the sensing dimension, i.e., the number of time–frequency sampling points in one FMCW frame.
\subsubsection{Feature Extraction}
The goal of the ISAC devices and the AP is to jointly estimate the class of the target. Due to the bandwidth limitation, each observation $\boldsymbol{r}_k$ must be compressed before transmission. Hence, we extract the low-dimensional feature from $\boldsymbol{r}_k$ using \emph{principal component analysis} (PCA) as \cite[Chapter 12]{bishop2006pattern} 

\begin{equation}\label{Eq:PCA}
\begin{aligned}
\tilde{\boldsymbol{x}}_k = \boldsymbol{U}^\top \boldsymbol{r}_k = \underbrace{\boldsymbol{U}^\top \boldsymbol{g}}_{\text{Ground-truth } \boldsymbol{x}} + \underbrace{\boldsymbol{U}^\top \boldsymbol{z}_k}_{\text{Sensing noise }\boldsymbol{d}_k} = \boldsymbol{x} + \boldsymbol{d}_k,
\end{aligned}
\end{equation}
where $\tilde{\boldsymbol{x}}_k \in \mathbb{C}^{M}$ is the feature vector obtained by a unitary matrix $\boldsymbol{U} \in \mathbb{C}^{F \times M}$, consisting of the ground-truth feature $\boldsymbol{x}$ and the sensing noise $\boldsymbol{d}_k$, and $M$ denotes the feature dimension with $M \le F$. Since the standard Gaussian distribution is unitary invariant, we have $\boldsymbol{d}_k \sim \mathcal{CN}(\bm{0}, \sigma_{k}^2 \mathbf{I}_{M})$. For tractability, the feature vector $\boldsymbol{x}$ of class $\ell \in \mathcal{L} = \{1, \cdots, {L}\}$, is modeled as a multivariate Gaussian distribution:
 \begin{equation}\label{Eq:class_istribution}
    p(\bm{x}|\ell) = \mathcal{CN}(\bm{x}\,|\,\boldsymbol{\mu}_\ell, \boldsymbol{\Sigma}),
\end{equation}
where $\boldsymbol{\mu}_\ell = [\mu_{\ell,1},\mu_{\ell,2},\cdots,\mu_{\ell,M}]^\top$ denotes mean vector and $\boldsymbol{\Sigma}= \mathrm{diag} (\sigma_1^2,\sigma_2^2,\cdots,\sigma_{M}^2)$ denotes diagonal covariance matrix $\boldsymbol{\Sigma} \in \mathbb{R}^{M \times M}$. Here, $L$ denotes the number of classes. The overall class distribution can thus be expressed as a linear superposition of Gaussian components:
$p(\bm{x}) = \frac{1}{L} \sum_{\ell=1}^{L} p(\bm{x}|\ell), \forall \ell\in\mathcal{L}.$ The distribution of the observed feature $\tilde{\boldsymbol{x}}_k$ can also be derived. Since the sensing noise is \emph{i.i.d.}, we have $    p(\tilde{\boldsymbol{x}}_k|\ell) \sim \mathcal{N}(\tilde{\boldsymbol{x}}_k\,|\,\boldsymbol{\mu}_\ell, \tilde{\boldsymbol{\Sigma}})$,
where $\tilde{\boldsymbol{\Sigma}} = \mathrm{diag} (\tilde{\sigma}_{k,1}^2,\tilde{\sigma}_{k,2}^2,\cdots,\tilde{\sigma}_{k,M}^2)$ with $\tilde{\sigma}_{k,m}^2=\tilde{\sigma}_{k,m}^2+\sigma_k^2.$ 

\subsection{Multi-Access Model}

Here, we adopt an analog AirComp transmission scheme for feature aggregation.
First, we present the key assumptions regarding the AirComp.
Based on these assumptions, we then introduce two feature transmission schemes and the corresponding receiver aggregation procedure.

\subsubsection{Key Assumptions}
Key assumptions about channel are summarized as follows. First, a frequency-selective slow fading channel with a coherence duration of $T_{\rm cd}$ is considered for all devices. Second, each device operates in \emph{time-division duplex} (TDD) mode and exploits channel reciprocity to acquire \emph{channel state information} (CSI) \cite[Chapter~5]{tse2005fundamentals}. The time interval between uplink and downlink transmissions satisfies $\Delta t_{\mathrm{TDD}} < T_{\mathrm{cd}}$, ensuring that the channel remains constant during the TDD cycle. In addition, AirComp assumes that all devices transmit their signals synchronously and exploit the signal superposition property of the wireless multiple-access channel to accomplish a specific computation task \cite{cao2020optimized}, e.g., 
\begin{equation}\label{Eq:ideal_received_signal}
    \boldsymbol{y} = \sum_{k=1}^{K} \tilde{\boldsymbol{x}}_k,
\end{equation}
which corresponds to a summation task, where $\boldsymbol{y}$ denotes the ideal received signal at the AP through joint transceiver design\footnote{For multi-user ISEA systems, common computation tasks include mean, maximum, and summation. Interested readers can refer to \cite{liu2025integrated} and references therein.}.

\subsubsection{Transmission Schemes}
To transmit the feature vector $\tilde{\boldsymbol{x}}_k = [\tilde{x}_{k,1}, \dots, \tilde{x}_{k,M}]^\top$ within one coherence block, two transmission schemes are considered: TDM and FDM \cite{carleial2003interference}.

In the TDM scheme, each feature element is sequentially transmitted over $M$ time slots. The received signal at the AP is  
\begin{equation}\label{eq:TDM_channel}
\tilde{y}_{t} = \sum_{k=1}^{K}h_{k,t}\, b_{k,t} \, \tilde{x}_{k,t} + w_t, \quad t=1,\dots,M,
\end{equation}
where $b_{k,t}$ denotes the transmit coefficient over slot $t$ at device $k$ and $w_t \sim \mathcal{CN}(0, \sigma_w^2)$ is the AWGN. We assume $M \leq T_{\mathrm{cd}}$ and thus the channel coefficient $h_{k,t}$ remains constant, i.e. $h_{k,t}=h_{k}$. 

In the FDM scheme, the feature elements are transmitted over $N$ subcarriers ($M\leq N$). The received signal is
\begin{equation}
\tilde{y}_{n} = \sum_{k=1}^{K}h_{k,n}\, b_{k,n} \, \tilde{x}_{k,n} + w_n, \quad n=1,\dots,M,
\end{equation}
where, with slight abuse of notation, $h_{k,n}$ denotes the channel response in frequency domain. Other notations follow the same definitions as the TDM scheme in \eqref{eq:TDM_channel}. 
The received feature vector is $
\boldsymbol{\tilde{y}} = [\tilde{y}_1, \tilde{y}_2, \ldots, \tilde{y}_{M}]^{\top},$
and the transmit coefficients satisfy the total power constraint
\begin{equation}\label{eq:power_constraint}
\sum_{n=1}^{M} |b_{k,n}|^2  \nu_{k,n}^2 \le P_{k},
\quad \nu_{k,n}^2 \triangleq \mathbb{E}[|\tilde{x}_{k,n}|^2],
\end{equation}
where $\nu_{k,n}^2$ can be estimated from offline training data samples \cite{wen2023task,chen2024view} and $P_{k}$ denotes the power budget for device $k$.

\subsubsection{Receiver Aggregation}
The AP performs linear aggregation with the receive coefficient $a_n$ as
\begin{equation}\label{Eq:aggregation_received_signal}
\hat{y}_n = a_n \tilde{y}_n
= a_n \sum_{k=1}^{K} h_{k,n} b_{k,n} \tilde{x}_{k,n} + a_n w_n.
\end{equation}
The aggregated feature vector is given by $\hat{\boldsymbol{y}} = [\hat{y}_1, \ldots, \hat{y}_{M}]^{\top}$, which is subsequently fed into the classification  model.

Finally, the error introduced by AirComp, i.e., the \emph{mean squared error} (MSE) between the ideal received signal defined in \eqref{Eq:ideal_received_signal} and the aggregated feature vector defined in \eqref{Eq:aggregation_received_signal}, is given by \cite{cao2020optimized}
\begin{equation}\label{Eq:AirComp_statistical_error}
\mathbb{E}[\|\boldsymbol{e}\|_2^2]
\!=\!\!
\sum_{n=1}^{M}\!\!\underbrace{\Big[ \!\sum_{k=1}^{K}\big|a_n h_{k,n} b_{k,n}-1\big|^2\tilde{\sigma}_{k,n}^2 \!\!+\!\!|a_n|^2\!\sigma_w^2\Big]}_{\mathrm{MSE}_n},
\end{equation}
where $\boldsymbol{e}\triangleq\hat{\boldsymbol{y}}-\boldsymbol{y}$ and  $\mathrm{MSE}_n$ denotes the MSE of the $n$-th subcarrier in FDM, or equivalently, the $t$-th time slot ($\mathrm{MSE}_t$) in TDM.

\subsection{Classification Model}
Consider a standard supervised classification model with a training set $\mathcal{S} = \{(\tilde{\boldsymbol{x}}_i, \ell_i)\}_{i=1}^{i=I},$
where $\tilde{\boldsymbol{x}}_i$ denotes the sensed features of the $i$-th observation defined in \eqref{Eq:PCA}, $\ell_i$ represents the corresponding class label defined in \eqref{Eq:class_istribution} and the total number of sensed samples is $I = K \times I_k$, where $K$ devices simultaneously perform $I_k$ sensing observations. The objective of training is to estimate the unknown function (classifier) $\ell=\mathrm{g}(\tilde{\boldsymbol{x}}_i )$ based on the given training set $\mathcal{S}$. The optimal classifier follows the \emph{maximum a posteriori} (MAP) rule as \cite{bishop2006pattern}
\begin{equation}
\mathrm{g}^\star(\tilde{\boldsymbol{x}}_i )=\arg \max _{\ell} p(\tilde{\boldsymbol{x}}_i |\ell).\label{eq:QptimalClassifier}
\end{equation}
Subsequently, the correct classification probability $\operatorname{Pr}(\ell\mid\mathrm{g}(\tilde{\boldsymbol{x}}_i))$ can be estimated for final decision making. The entire process from sensing to inference constitutes the following Markov chain:  ${\bm g}  \rightarrow {\bm r}_k \rightarrow
	\tilde{\boldsymbol{x}}_k\rightarrow
	\boldsymbol{\tilde{y}}\rightarrow \hat{\boldsymbol{y}}\rightarrow \ell$, as shown in Fig.~\ref{1Multi_User_Task_ISCC_intro}.
    
\section{Metrics}
To quantify the classification performance, we introduce two proxies, namely the computation-optimal and decision-optimal proxies. A theoretical justification for both proxies is provided by analyzing their relationship with the correct classification probability.

\subsection{Computation-Optimal Proxy}
The computation-optimal proxy is built upon the concept of the \emph{classification margin} $\gamma$ \cite{sokolic2017robust,liu2023over}, which establishes a bridge between the classification accuracy and the AirComp errors \eqref{Eq:AirComp_statistical_error} \footnote{The classification margin is an intrinsic property of a classifier, representing an inherent separation in the feature space.}. An illustrative example of classification margin is given in Fig.~\ref{2Fig2_geometric}.

\begin{theorem}[\cite{sokolic2017robust,liu2023over}]\label{theorem:computation_optimal}
    The classification accuracy at the AP is lower bounded as
    \begin{equation}
\!A_{\mathrm{AP}} \! \geq \!A_0 \operatorname{Pr}\!\!\left[\|\boldsymbol{e}\|<\gamma\right] 
\!\overset{(a)}{\geq} \!A_0\!\!\left(1-\frac{\mathbb{E}[\|\boldsymbol{e}\|^2]}{\gamma^2}\right),
    \end{equation}
    where $A_0$ denotes the inference performance of the classification model $g(\cdot)$ under noise-free transmission and (a) follows the Markov’s inequality. 
\end{theorem}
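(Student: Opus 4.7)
The plan is to combine a margin-based indicator argument with Markov's inequality, mirroring the robustness bound of \cite{sokolic2017robust}. Recall that the classification margin $\gamma$ is the minimum distance from the noise-free aggregated feature $\bm y$ to the decision boundary of the trained classifier $\mathrm{g}(\cdot)$; equivalently, whenever $\|\bm e\|_2<\gamma$ the perturbed feature $\hat{\bm y}=\bm y+\bm e$ stays in the same decision region as $\bm y$, so $\mathrm{g}(\hat{\bm y})=\mathrm{g}(\bm y)$. This geometric fact is the pivot on which both inequalities rest.

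First, I would decompose $A_{\mathrm{AP}}$ by conditioning on the margin event $\mathcal{E}\triangleq\{\|\bm e\|_2^2<\gamma^2\}$ and discard the nonnegative contribution from $\mathcal{E}^c$, yielding
\[
A_{\mathrm{AP}}\;\geq\;\Pr[\mathrm{g}(\hat{\bm y})=\ell \mid \mathcal{E}]\,\Pr[\mathcal{E}].
\]
On $\mathcal{E}$, correctness of the AirComp-perturbed classifier reduces to correctness of the noise-free classifier. Since $\bm e$---driven by AP-side AWGN and residual channel misalignment in \eqref{Eq:aggregation_received_signal}---is exogenous to the class label and the ground-truth feature $\bm x$, the conditional accuracy equals the noise-free accuracy $A_0$. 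This establishes the first inequality $A_{\mathrm{AP}}\geq A_0\,\Pr[\|\bm e\|_2^2<\gamma^2]$.

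Second, I would apply Markov's inequality to the non-negative random variable $\|\bm e\|_2^2$ to obtain $\Pr[\|\bm e\|_2^2\geq\gamma^2]\leq \mathbb{E}[\|\bm e\|_2^2]/\gamma^2$, whose complement gives $\Pr[\|\bm e\|_2^2<\gamma^2]\geq 1-\mathbb{E}[\|\bm e\|_2^2]/\gamma^2$. Substituting this into the previous display yields step (a) and completes the chain. The closed-form MSE expression in \eqref{Eq:AirComp_statistical_error} can then be plugged in to turn the bound into a tractable objective for the subsequent transceiver optimization.

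The main obstacle is justifying the step that replaces $\Pr[\mathrm{g}(\hat{\bm y})=\ell\mid\mathcal{E}]$ by the marginal noise-free accuracy $A_0$: in principle, conditioning on a small-error event could bias the posterior over $\bm x$. Under the assumptions of Section~\ref{sec:system_model}---namely that $\bm e$ is a function only of channel coefficients, transceiver coefficients, and AP-side AWGN, all independent of the class label---this conditional-independence reduction goes through, and the argument recovers the classical robustness bound of \cite{sokolic2017robust,liu2023over}.
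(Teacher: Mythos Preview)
Your proposal is correct and aligns with the paper's treatment: the paper does not supply a standalone proof but cites \cite{sokolic2017robust,liu2023over} for the margin-based robustness bound and explicitly attributes step~(a) to Markov's inequality, which is precisely the two-step structure you lay out. You have simply filled in the details the paper defers to the references.
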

Theorem~\ref{theorem:computation_optimal} implies that a sufficient condition for correct classification is that the statistical error introduced by AirComp must be sufficiently small. Since $\gamma$ is fixed for a given classification model, minimizing the computation MSE in \eqref{Eq:AirComp_statistical_error} contributes to improved classification accuracy.

\begin{figure}[!h]
    \centering
    \includegraphics[width=0.28\textwidth]{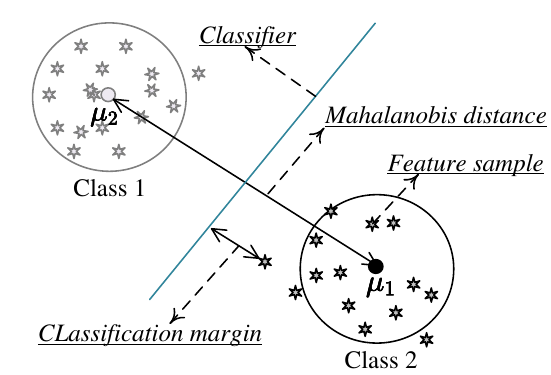}
    \caption{The geometric interpretation of classification margin and Mahalanobis distance.}
    \label{2Fig2_geometric}
\end{figure}
\subsection{Decision-Optimal Proxy}
The decision-optimal proxy is developed from the perspective of Bayesian hypothesis testing, which establishes a connection between MD and classification accuracy.
\begin{theorem}\label{theorem:Decision-Optimal_Proxy}
The MAP estimator achieves the minimum error probability in classifying $\bm{x}$, and the corresponding classification error probability asymptotically behaves as
\begin{equation}\label{eq:AsymptoticsPosterio2}
     \operatorname{P}_e \doteq\exp\left(-\kappa K \operatorname{G}_{\operatorname{min}}\right),
\end{equation}
where $\doteq$ denotes equivalence in exponential order, $\kappa$ is the constant coefficient and $\operatorname{G}_{\operatorname{min}}$ is the minimum inter-class MD between any two classes, i.e., $\operatorname{G}_{\operatorname{min}} = \underset{(\ell,\ell^{\prime}) \in \mathcal{L}}{\min}\operatorname{G}_{\ell,\ell^{\prime}}$.
\end{theorem}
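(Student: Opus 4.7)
The plan is to establish Theorem~\ref{theorem:Decision-Optimal_Proxy} in two parts: optimality of the MAP rule, and then the exponential decay rate of its error probability in the multi-device regime.

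For the optimality part, I would invoke the standard Bayes-risk argument. Under the uniform class prior $\pi_\ell = 1/L$, the conditional error at a given observation equals $1 - p(\hat\ell(\boldsymbol{x})\mid\boldsymbol{x})$, and this is minimized pointwise by $\hat\ell(\boldsymbol{x}) = \arg\max_\ell p(\ell\mid\boldsymbol{x}) = \arg\max_\ell p(\boldsymbol{x}\mid\ell)$, coinciding exactly with \eqref{eq:QptimalClassifier}. Integrating against $p(\boldsymbol{x})$ then shows that the error probability of any competing rule dominates that of MAP.

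For the exponential-order part, I would first reduce the $K$-device problem to a single Gaussian detection via the sufficient statistic $\bar{\boldsymbol{x}}_K = \tfrac{1}{K}\sum_{k=1}^{K}\tilde{\boldsymbol{x}}_k$. Under \eqref{Eq:PCA} and \eqref{Eq:class_istribution}, $\bar{\boldsymbol{x}}_K \mid \ell \sim \mathcal{CN}(\boldsymbol{\mu}_\ell,\, \boldsymbol{\Sigma}_K)$ with $\boldsymbol{\Sigma}_K \triangleq \boldsymbol{\Sigma} + \tfrac{\sigma_{\mathsf r}^2}{K}\boldsymbol{I}_M$, i.e.\ an $L$-ary Gaussian detection problem with common covariance shrinking in $K$. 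Applying the Chernoff/Bhattacharyya bound gives, for every pair $(\ell,\ell')$,
\begin{equation}\label{eq:chernoff}
\operatorname{P}_e(\ell,\ell') \le \tfrac{1}{2}\exp\!\left(-\tfrac{1}{8}(\boldsymbol{\mu}_\ell-\boldsymbol{\mu}_{\ell'})^{\!\top}\boldsymbol{\Sigma}_K^{-1}(\boldsymbol{\mu}_\ell-\boldsymbol{\mu}_{\ell'})\right),
\end{equation}
and the union bound across the $\binom{L}{2}$ pairs yields $\operatorname{P}_e \le \binom{L}{2}\exp(-\kappa K \operatorname{G}_{\min})$ once $\operatorname{G}_{\ell,\ell'}$ is identified with the per-device Mahalanobis distance. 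A matching lower bound follows from $\operatorname{P}_e \ge \tfrac{1}{L}\operatorname{P}_e(\ell^\star,\ell'^\star)$ at the worst pair $(\ell^\star,\ell'^\star)$ attaining $\operatorname{G}_{\min}$, together with the Gaussian $Q$-function tail equivalence $Q(x)\doteq \exp(-x^2/2)$. The two bounds share the leading exponent $\kappa K \operatorname{G}_{\min}$, giving \eqref{eq:AsymptoticsPosterio2}.

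The main obstacle will be justifying the linear-$K$ scaling of the exponent. Since $\boldsymbol{\Sigma}_K^{-1} = (\boldsymbol{\Sigma} + \tfrac{\sigma_{\mathsf r}^2}{K}\boldsymbol{I})^{-1}$ is not linear in $K$ when $\boldsymbol{\Sigma} \ne \boldsymbol{0}$, the quadratic form in \eqref{eq:chernoff} need not scale like $K$ on its own. I would resolve this by working in the sensing-noise-dominated regime $\boldsymbol{\Sigma} \preceq \tfrac{\sigma_{\mathsf r}^2}{K}\boldsymbol{I}$, where $\boldsymbol{\Sigma}_K^{-1}\approx \tfrac{K}{\sigma_{\mathsf r}^2}\boldsymbol{I}$ recovers the clean exponent $\operatorname{G}_{\ell,\ell'} = \|\boldsymbol{\mu}_\ell-\boldsymbol{\mu}_{\ell'}\|_2^2/\sigma_{\mathsf r}^2$, and absorbs the higher-order curvature into the constant $\kappa$; tightness of the union bound in exponential order is then standard large-deviations reasoning, as the smallest pairwise exponent dominates the sum.
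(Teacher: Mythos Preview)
Your argument is sound and more explicit than the paper's, but it follows a different route. The paper does not build pairwise Chernoff upper/lower bounds directly; instead it (i) invokes Fano's inequality \eqref{Eq:Fano} to tie $P_e$ to $H(\ell\mid\boldsymbol{x})$, (ii) cites an asymptotic equivalence from \cite{kanaya1995asymptotics}, namely $H(\ell\mid\boldsymbol{x})\doteq P_e\doteq\exp(-KC)$ with $C$ the minimum pairwise Chernoff information, and then (iii) observes that under equal class covariances the Chernoff information is attained at $\rho=1/2$ (Bhattacharyya) and reduces to $\kappa\operatorname{G}_{\ell,\ell'}$. Your approach buys self-containment and an explicit constant ($\kappa=1/8$), at the price of carrying the union/worst-pair sandwich by hand; the paper's approach is terser but outsources the heavy lifting to the cited reference.

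One point to straighten out: the ``obstacle'' you flag about the linear-$K$ scaling is an artifact of the shared-$x$ model you adopted for $\bar{\boldsymbol{x}}_K$. The paper's proof applies the i.i.d.\ large-deviations result of \cite{kanaya1995asymptotics} with the per-observation law $p(\boldsymbol{x}\mid\ell)=\mathcal{CN}(\boldsymbol{\mu}_\ell,\boldsymbol{\Sigma})$, i.e.\ the $K$ device features are treated as conditionally i.i.d.\ given $\ell$. Under that model the sample mean has covariance $\boldsymbol{\Sigma}/K$, your Bhattacharyya exponent in \eqref{eq:chernoff} is exactly $\tfrac{K}{8}(\boldsymbol{\mu}_\ell-\boldsymbol{\mu}_{\ell'})^{\top}\boldsymbol{\Sigma}^{-1}(\boldsymbol{\mu}_\ell-\boldsymbol{\mu}_{\ell'})=\tfrac{K}{8}\operatorname{G}_{\ell,\ell'}$, and no sensing-noise-dominated regime or curvature absorption into $\kappa$ is needed. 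If you instead keep the common-target model $\tilde{\boldsymbol{x}}_k=\boldsymbol{x}+\boldsymbol{d}_k$ with a single shared $\boldsymbol{x}$, then $P_e$ cannot vanish as $K\to\infty$ whenever $\boldsymbol{\Sigma}\succ\boldsymbol{0}$, so \eqref{eq:AsymptoticsPosterio2} would fail; your proposed fix (restricting to $\boldsymbol{\Sigma}\preceq\tfrac{\sigma_{\mathsf r}^2}{K}\boldsymbol{I}$) is not a genuine asymptotic regime and does not rescue the claim. In short: align with the paper's i.i.d.\ interpretation and your Chernoff/union-bound argument goes through cleanly.
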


\begin{proof}
    First, the classification error
probability $\operatorname{P}_e$ of any estimator is lower bounded by Fano’s inequality as \cite{kanaya1995asymptotics}
\begin{equation}\label{Eq:Fano}
H(\ell|{\boldsymbol{x}})
\leq H_b(\operatorname{P}_e) + \operatorname{P}_e \log (|\mathcal{L}|-1),
\end{equation}
where $\operatorname{P}_e \triangleq  1-\operatorname{Pr}(\ell\mid\mathrm{g}({\boldsymbol{x}}))$, $H(\ell|{\bm{x}})$ denotes the conditional entropy, $H_b(\cdot)$ is the binary entropy function and $|\mathcal{L}|$ is the cardinality of a set. For a sufficiently large observations, we have the following asymptotic relationship \cite{kanaya1995asymptotics}
\begin{equation}\label{eq:AsymptoticsPosterio}
    H(\ell|{\bm{x}}) \doteq P_e \doteq \exp\left(-K C\right),
\end{equation}
where $C=\underset{0\leq\rho\leq1}{\min} \mathsf{D}_{\mathrm{ch}}\left(p({\bm{x}}|\ell),p({\bm{x}} |\ell^{\prime})\right), \forall (\ell,\ell^{\prime}) \in \mathcal{L}$ denotes the minimum Chernoff information over all the possible class pairs, achieved by optimizing the parameter $\rho$ in the Chernoff distance $\mathsf{D}_{\mathrm{ch}}(\cdot,\cdot)$ \cite[P. 98]{fukunaga2013introduction}. Under the identical class covariances $\boldsymbol{\Sigma}$ (see \eqref{Eq:class_istribution}), the Bhattacharyya distance coincides with the optimal Chernoff distance when $\rho = 1/2$, and can be further simplified as 
\begin{equation}\label{eq:ch_Mahalanobis}
\!C \! = \kappa \operatorname{G}_{\operatorname{min}}, 
\end{equation}
where $\operatorname{G}_{\operatorname{min}}=\underset{(\ell,\ell^{\prime}) \in \mathcal{L}}{\min}\operatorname{G}_{\ell,\ell^{\prime}}({\bm{x}})$ is the minimum inter-class MD between any two classes and $\operatorname{G}_{\ell,\ell^{\prime}}(\bm{x})$ is
\begin{equation}\label{eq:Mahalanobis}
    \!\!\!\operatorname{G}_{\ell,\ell^{\prime}}(\bm{x})\!\!=\!\!\left( {\bm \mu}_{\ell}\! -\!{\bm \mu}_{\ell^{\prime}}\right)^\top \!{\bm \Sigma}^{-1}\!\! \left(  {\bm \mu}_{\ell}\! -\!{\bm \mu}_{\ell^{\prime}} \right)\!\!\overset{(a)}{=}\! \sum_{m=1}^{M}\operatorname{G}_{\ell,\ell^{\prime}}(x_m),\!\!\! 
\end{equation} where $\operatorname{G}_{\ell,\ell^{\prime}}(x_m) \triangleq\frac{(\mu_{\ell,m} - \mu_{\ell^{\prime},m})^2}{\sigma_m^2}$ and (a) follows from the statistical independence of the feature dimensions after PCA. By substituting \eqref{eq:ch_Mahalanobis} into \eqref{eq:AsymptoticsPosterio}, we have \eqref{eq:AsymptoticsPosterio2}.
\end{proof}

As implied by Theorem~\ref{theorem:Decision-Optimal_Proxy}, with sufficient independent observations in aggregation, the class uncertainty will decrease. Similarly, maximizing the inter-class MD also leads to reduced uncertainty and improved classification performance. An illustrative example of MD is
given in Fig.~\ref{2Fig2_geometric}.
\section{Transceiver Design under the TDM Setting}
\label{sec:Optimal_AirComp_TDM}
Building on the two aforementioned proxies, we derive the optimal transceiver designs for both computation-optimal and decision-optimal cases under the TDM setting.
\subsection{Computation-Optimal Case}
Under the computation-optimal case, the optimization problem can be formulated as
\begin{equation*}
(\mathrm{P}1) \min_{\{b_{k,t},a_t\}_{t=1}^{t=M}} \quad \sum_{t=1}^{M} \mathrm{MSE}_t, \quad
\text{s.t.}  \eqref{eq:power_constraint}.
\end{equation*}
Since the channel is slow fading (i.e., $h_{k,t}=h_{k}$, $b_{k,t}=b_{k}$, $\nu_{k,t}=\nu_{k}$, and $a_t=a$), $(\mathrm{P}1)$ reduces to independent per-slot optimizations as
\begin{equation*}
(\mathrm{P}2) \min_{b_{k},a} \quad  \mathrm{MSE}_t, \quad
\text{s.t.}  |b_{k}|^2 \nu_{k}^2 \le P_{k}.
\end{equation*}
$(\mathrm{P}2)$ has an optimal solution, as stated in the following theorem.
\addtolength{\topmargin}{0.051 in}
\begin{theorem}[\cite{cao2020optimized,liu2020over}]\label{theorem_MMSE_Optimal_TDM}
    Assuming $P_1|h_1| \le \cdots P_k|h_k| \le \cdots P_K|h_{K}|$, the optimal solution of $(\mathrm{P}2)$ is
    \begin{equation}\label{eq:threshold-based_structure}
        |b_k^\star|
=\begin{cases}
\dfrac{\sqrt P_k}{|\nu_k|}, & 1 \le k\le k^\star,\\[6pt]
\dfrac{1}{|a^\star|| h_k|}, & k^\star < k\le K,
\end{cases}
    \end{equation}
    where $a^\star=\frac{ \sqrt P_k\!\sum_{k=0}^{k^\star}h_k }
    { P_k \sum_{k=0}^{k^\star}h_{k}^2 + \sigma_w^2}$ and $k^\star$ denotes the threshold index that separates the full-power and channel-inversion transmission.
\end{theorem}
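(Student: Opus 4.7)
The plan is to decouple the joint design of the transmit coefficients $\{b_k\}$ and the receive coefficient $a$ via a two-level argument: an inner per-device optimization at fixed receive gain, followed by an outer scalar optimization over the magnitude $|a|$.

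First I would perform a phase-alignment reduction. The noise contribution $|a|^2\sigma_w^2$ in \eqref{Eq:AirComp_statistical_error} depends only on the magnitude of $a$, while each distortion term $|a h_k b_k - 1|^2$ is minimized (for fixed magnitudes) when $a h_k b_k$ is real and nonnegative. Setting $\arg(a)=0$ and $\arg(b_k)=-\arg(h_k)$ therefore incurs no loss of optimality, and reduces $(\mathrm{P}2)$ to an optimization over nonnegative amplitudes $|a|$ and $\{|b_k|\}$ with a simpler real-valued objective.

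Next, with $|a|$ fixed, the problem separates across users into $K$ one-dimensional box-constrained quadratic programs. For each $k$, the unconstrained minimizer is the channel-inversion choice $|b_k|=1/(|a||h_k|)$, which is feasible if and only if $|a|^2|h_k|^2 P_k \ge \nu_k^2$. When infeasible, the constrained optimum lies at the power-budget boundary $|b_k| = \sqrt{P_k}/|\nu_k|$, corresponding to full-power transmission with phase alignment. This dichotomy is precisely the two branches of \eqref{eq:threshold-based_structure}. Under the stated ordering of $P_k|h_k|$, the feasibility condition is monotone in the user index, so there exists a threshold $k^\star$ with indices $k\le k^\star$ in the full-power regime and $k>k^\star$ in the channel-inversion regime. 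Substituting this piecewise structure back into the objective annihilates the distortion contribution of the inversion users and leaves a scalar function of $|a|$; differentiating with respect to $|a|$ and setting the derivative to zero yields the closed-form $a^\star$ given in the theorem.

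The main obstacle is closing the fixed-point loop between $a^\star$ and $k^\star$, since the threshold depends on $|a^\star|$ while $|a^\star|$ is derived under an assumed threshold. I would resolve this by a monotonicity argument: as $|a|$ increases, the set of inversion-feasible users is nondecreasing, so the mapping $|a|\mapsto k^\star(|a|)$ is piecewise constant and monotone. On each such interval, the reduced objective is a strictly convex quadratic in $|a|$ with a unique closed-form minimizer, and the globally optimal pair $(a^\star, k^\star)$ is identified by selecting the interval whose minimizer is internally consistent (i.e., lies inside the interval that induced it). Since the theorem is attributed to prior work, I would conclude by verifying that the consistent minimizer matches the stated expression for $a^\star$.
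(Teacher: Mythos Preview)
The paper does not supply its own proof of this theorem; it is quoted verbatim from \cite{cao2020optimized,liu2020over} and the text proceeds immediately to the decision-optimal case. Your two-level argument (phase alignment, per-user box-constrained quadratic at fixed $|a|$, then scalar optimization over $|a|$ with a monotone threshold search) is exactly the standard derivation in those references, so there is nothing to contrast and your proposal is correct.

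One small caveat on the monotonicity step: the feasibility condition for channel inversion is $|a|^2|h_k|^2 P_k \ge \nu_k^2$, so the quantity that must be ordered is $\sqrt{P_k}\,|h_k|/|\nu_k|$ rather than $P_k|h_k|$ as written in the theorem statement. This is an artifact of the theorem's phrasing (the cited works typically normalize $\nu_k$ or assume it is common across users), not a flaw in your argument; just be aware that the threshold structure you derive hinges on the ordering of the former ratio.
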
 
\subsection{Decision-Optimal Case}
Unlike the computation-optimal case, the decision-optimal case aims to maximize the inter-class MD $\operatorname{G}_{\ell,\ell^{\prime}}(\boldsymbol{x})$, as defined in \eqref{eq:Mahalanobis}. The decision-optimal formulation can also be optimized independently across different time slots.
To this end, we first derive the expression of MD for each received feature element and then formulate the optimization problem.

Based on \eqref{Eq:aggregation_received_signal} and \eqref{eq:Mahalanobis}, the MD of the $m$-th received feature element can be expressed as
\begin{equation}\label{Eq:y_m_Mahalanobis}
    \operatorname{G}_{\operatorname{min}}({\hat y}_m)\!\! =\!\!\underset{(\ell,\ell^{\prime}) \in \mathcal{L}}{\min}\!\! \operatorname{G}_{\ell,\ell^{\prime}}({\hat y}_m)\!=\!\frac{\big|\sum_{k=1}^{k=K} h_k b_k\big|^2\Delta_m}{\sum_{k=1}^{k=K} \!\!|h_k b_k|^2\tilde{\sigma}_{k,m}^2\!+\!\sigma_w^2},
\end{equation}
where $\Delta_m \triangleq \underset{(\ell,\ell^{\prime}) \in \mathcal{L}}{\min}(\mu_{\ell,m} - \mu_{\ell^{\prime},m})^2$. 
It can be observed that the MD is independent of $a$. Hence, the optimization focuses on the $b_k$, as formulated below.
\begin{equation*}
(\mathrm{P}3) \max_{b_{k}} \quad  \eqref{Eq:y_m_Mahalanobis}, \quad
\text{s.t.}  |b_{k}|^2 \nu_{k}^2 \le P_{k}.
\end{equation*}
$(\mathrm{P}3)$ has an optimal solution, which is given in the following theorem.

\begin{theorem}
    $(\mathrm{P}3)$ has an optimal solution as 
    \begin{equation}\label{eq:p3_solution} 
        |b_k^\star|
    =\begin{cases}
    \dfrac{\sqrt P_k}{|\nu_k|}, & u_k\le \tau^\star,\\[6pt]
    \dfrac{\tau^\star}{|h_k|}, & u_k>\tau^\star,
    \end{cases}
    \end{equation}
    where $u_k \triangleq \dfrac{|h_k|\sqrt{P_k}}{\nu_k}$ and $\tau^\star$ denotes the capping threshold.   
\end{theorem}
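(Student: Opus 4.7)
My first step would be to eliminate the phases of $b_k$. Both the power constraint $|b_k|^2\nu_k^2\le P_k$ and the denominator of \eqref{Eq:y_m_Mahalanobis} depend only on $|b_k|$, while the numerator $|\sum_k h_k b_k|^2$ is maximized (for fixed moduli) by aligning $\arg b_k=-\arg h_k$. Substituting $x_k\triangleq|h_k||b_k|\in[0,u_k]$, problem $(\mathrm{P}3)$ becomes the box-constrained scalar program
\begin{equation*}
\max_{0\le x_k\le u_k}\ f(\mathbf{x})=\frac{\Delta_{\min}\,S(\mathbf{x})^2}{\sigma_m^2\,T(\mathbf{x})+\sigma_w^2},
\end{equation*}
with $S(\mathbf{x})=\sum_k x_k$ and $T(\mathbf{x})=\sum_k x_k^2$, under which the claimed policy reads $x_k^\star=\min(u_k,\tau^\star)$.

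Next, to circumvent the non-concavity of $f$, I would perform a two-stage decomposition. For any feasible sum $S$, the inner problem of minimizing $T$ subject to $\sum_k x_k=S$ and $0\le x_k\le u_k$ is a convex program whose solution is the standard water-filling form $x_k=\min(u_k,c)$, with $c$ chosen to match $S$. Since reducing $T$ strictly increases $f$, every candidate optimum of the outer problem must lie on this one-parameter curve, so the search collapses to a single variable $\tau\ge 0$ with $S(\tau)=\sum_k\min(u_k,\tau)$ and $T(\tau)=\sum_k\min(u_k,\tau)^2$.

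Along this curve, the identity $T'(\tau)=2\tau S'(\tau)$ (valid away from the breakpoints $\{u_k\}$) yields $f'(\tau)\propto\psi(\tau)$ where $\psi(\tau)\triangleq\sigma_m^2[T(\tau)-\tau S(\tau)]+\sigma_w^2$. I would then show that $\psi$ has a unique zero. Because $x_k\le\tau$ on the water-filling set, $T(\tau)-\tau S(\tau)\le 0$; moreover, a direct differentiation shows this quantity is strictly decreasing on each piece once at least one device saturates. Hence $\psi$ is continuous, piecewise linear, equal to $\sigma_w^2>0$ on $(0,\min_k u_k]$, and tending to $-\infty$ as $\tau\to\infty$, so it admits a unique root $\tau^\star$. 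The sign change of $\psi$ (from $+$ to $-$) certifies that $\tau^\star$ globally maximizes $f$ along the curve, hence over the whole feasible set. Finally, rewriting $\psi(\tau^\star)=0$ as $\tau^\star=[\sigma_m^2 T(\tau^\star)+\sigma_w^2]/[\sigma_m^2 S(\tau^\star)]$ fixes the capping threshold, and translating $x_k^\star=\min(u_k,\tau^\star)$ back to $b_k$ recovers \eqref{eq:p3_solution}.

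The main obstacle I foresee is precisely the non-concavity of $f$ in $\mathbf{x}$: because its sublevel sets $\{f\ge\alpha\}$ are non-convex, naive KKT stationarity does not certify a global maximum, and a direct enumeration of active sets for the box constraints would force a sorting-based case analysis. The two-stage reduction sidesteps this by collapsing the feasible region onto a one-dimensional water-filling path along which unimodality can be established cleanly via the monotonicity of $\psi$, simultaneously delivering existence, uniqueness, and global optimality of $\tau^\star$.
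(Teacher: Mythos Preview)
Your plan is correct and mirrors the paper's two-stage decomposition: change variables to $c_k=|h_k||b_k|\in[0,u_k]$, fix $S=\sum_k c_k$, minimize $\sum_k c_k^2$ via KKT to obtain the capped form $c_k=\min(u_k,\tau)$, then optimize over the single scalar $\tau$. The only real difference lies in this last one-dimensional step. The paper sorts the $u_k$, works segment-by-segment, computes the interior stationary point $\tau_j=\big(\sum_{k\le j}u_k^2+\sigma_{\mathrm{eq}}^2\big)\big/\sum_{k\le j}u_k$ on each piece, checks whether $\tau_j\in(u_j,u_{j+1})$, and then compares across all segments to select the global optimum. You instead establish unimodality directly by showing that $\psi(\tau)=\sigma_m^2[T(\tau)-\tau S(\tau)]+\sigma_w^2$ is continuous, piecewise linear, equal to $\sigma_w^2$ before any saturation, and strictly decreasing thereafter, hence has a unique root $\tau^\star$; solving $\psi(\tau)=0$ on the active segment reproduces exactly the paper's $\tau_j$. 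Your monotonicity argument is a bit cleaner because it certifies global optimality in one stroke and eliminates the cross-segment comparison, while the paper's enumeration is more algorithmic and makes the search procedure explicit. You are also explicit about the phase alignment $\arg b_k=-\arg h_k$ that reduces the problem to nonnegative reals, a step the paper absorbs silently into the substitution $c_k=|h_k||b_k|$.
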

\begin{proof}
    Similar to the computation-optimal case, we assume that $P_1|h_1| \le \cdots P_k|h_k| \le \cdots P_K|h_{K}|$. Define $\tilde\Delta=\tfrac{|\Delta_m|^2}{\tilde{\sigma}_{k,m}^2}$, $\sigma_{\sf{eq}}^2=\tfrac{\sigma_w^2}{\tilde{\sigma}_{k,m}^2}$, $c_k\triangleq |h_k||b_k|$ and $u_k \triangleq \frac{|h_k|\sqrt{P_k}}{\nu_k}$. Then, $(\mathrm{P}3)$ can be equivalently reformulated as
\begin{equation*}
(\mathrm{P}4) \!\!\quad
\max_{{c_k}}
F(\boldsymbol{c})
\!=\!\frac{\tilde\Delta(\sum_{k=1}^{k=K} c_k)^2}{\sum_{k=1}^{k=K} c_k^2+\sigma_{\sf{eq}}^2},\quad\!\!\!
\text{s.t.} 0\le c_k\le u_k,\ \forall k.
\end{equation*}
$(\mathrm{P}4)$ is a fractional programming problem \cite{boyd2004convex}. Define $\sum_{k=1}^{K} c_k = c_{\sf{sum}}$.
Since the denominator of $F(\boldsymbol{c})$ is a convex function, $(\mathrm{P}4)$ can be readily verified to be equivalent to 
\begin{equation*}
(\mathrm{P}5)\ 
\min_{\{c_k\}} \quad  \sum_{k=1}^{K} c_k^2, \qquad
\text{s.t.}  \sum_{k=1}^{K} c_k = c_{\sf{sum}}, \\[3pt]
 0 \le c_k \le u_k,\ \forall k.
\end{equation*}
$(\mathrm{P}5)$ is a quadratic programming problem \cite{boyd2004convex}, whose Lagrangian can be expressed as
\begin{equation*}
    \!\!\mathcal{L}_{\mathrm{P}5}\!=\!\!\sum_{k=1}^{K} c_k^2\!-\!\!\lambda\Big(\sum_{k=1}^{K} \!c_k-\!\!c_{\sf{sum}}\Big)
\!\!+\!\!\sum_{k=1}^{K}\!\alpha_k(c_k-u_k)\!-\!\sum_{k=1}^{K}\beta_k c_k,
\end{equation*}
where $\lambda$, $\alpha_k$, and $\beta_k$ denote the Lagrange multipliers.
According to the \emph{Karush–Kuhn–Tucker} (KKT) conditions \cite{boyd2004convex}, the optimal solution satisfies $    c_k^\star=\min\{u_k,\tau\}$,
which exhibits a capped-based structure, where weaker links (i.e., smaller $u_k$) saturate first. Without loss of generality, we assume that the first $j$ links are saturated, while the remaining $(K-j)$ links share a common capping threshold $\tau$, i.e.,
\begin{equation}\label{eq:capped-based_structure}
    c_k
=\begin{cases}
u_k, & 1 \le k\le j,\\[6pt]
\tau, & j < k\le K.
\end{cases}
\end{equation}
By substituting \eqref{eq:capped-based_structure} into $F(\boldsymbol{c})$, we have $
    F(\tau)
=\frac{\tilde\Delta\left(\sum_{k=1}^{j} u_k + (K-j)\tau\right)^2}{\left(\sum_{k=1}^{j} u_k^2\right) + (K-j)\tau^2 + \sigma_{\sf{eq}}^2}.$
Checking the first derivative of $F(\tau)$ yields the stationary point
\begin{equation}
    \tau_j=\frac{\sum_{k=1}^{j} u_k^2 + \sigma_{\sf{eq}}^2}{\sum_{k=1}^{j} u_k},j\ge1.
\end{equation}
If $\tau_j \in (u_j,u_{j+1})$, it serves as the unique maximizer of $F(\tau)$ within this segment. Otherwise, the segment-wise optimum occurs at the boundary, i.e., $\tau = u_j$ or $\tau = u_{j+1}$.
By comparing all segments, the global threshold $\tau^\star$ can be obtained.
Accordingly, the optimal $c_k^\star$ values are determined, and the corresponding transmit amplitudes $|b_k^\star|$ can be obtained as \eqref{eq:p3_solution}.
\end{proof}



\begin{remark}
By comparing \eqref{eq:p3_solution} and \eqref{eq:threshold-based_structure}, we observe that both proxies yield the same optimal solution structure. The underlying reason is that, in the TDM scheme, the temporal degrees of freedom do not provide additional gain, since all slots experience the same quasi-static channel realization. In this regard, irrespective of the chosen proxy, the resulting solution exhibits the same structural form.
\end{remark}

We notice that the discriminative importance is heterogeneous across feature dimensions. It is therefore desirable for the channel degrees of freedom to support corresponding multiplexing, enabling non-uniform resource allocation aligned with feature importance.
We next extend the analysis to the FDM case.
\section{Transceiver Design under the FDM Setting}\label{sec:Optimal_AirComp_TDM}
In this section, the transceiver designs under the FDM setting are derived for both computation-optimal and decision-optimal cases.
\subsection{Computation-Optimal Case}
Different from $(\mathrm{P}1)$, each feature element here is transmitted over different subcarriers rather than different time slots.
Due to the frequency-selective channel, a joint power constraint across all subcarriers should be considered. The corresponding optimization problem can be formulated as
\begin{equation*}
(\mathrm{P}6) \min_{\{b_{k,n},a_n\}_{n=1}^{n=M}} \quad \sum_{n=1}^{M} \mathrm{MSE}_n, \quad
\text{s.t.}  \eqref{eq:power_constraint}.
\end{equation*}
$(\mathrm{P}6)$ is non-convex due to the coupling between $b_{k,n}$ and $a_n$. Fortunately, it can be verified that $(\mathrm{P}6)$ satisfies the time-sharing condition \cite{yu2006dual}, and thus the duality gap becomes negligible when the number of subcarriers is large. Next, $(\mathrm{P}6)$ can be optimally solved using the Lagrange duality method. The Lagrangian and its dual function of $(\mathrm{P}6)$ are respectively given by 
\begin{equation}\label{eq:LagrangianP4}
\mathcal{L}_{\mathrm{P}6}
=\sum_{n=1}^{M}\left[
\Phi_n(a_n,{b_{k,n}};\boldsymbol\lambda)
\right]
-\sum_{k=1}^{K}\lambda_k P_{k},
\end{equation}
\begin{equation}\label{eq:dualP4}
    \!\!g(\boldsymbol\lambda)\!\!=\!\!\!\!\!\inf_{{a_n},{b_{k,n}}}\!\!\mathcal{L}_{\mathrm{P}6}\!
=\!\!\!\sum_{n=1}^{M}\underbrace{
\!\inf_{a_n,{b_{k,n}}}\Phi_n(a_n,{b_{k,n}};\boldsymbol\lambda)
}_{\psi_n(\boldsymbol\lambda)}
\!-\!\!\!\sum_{k=0}^{K-1}\lambda_k P_k,
\end{equation}
where $\Phi_n(a_n,{b_{k,n}};\boldsymbol\lambda)\triangleq\sum_{k=1}^{K}|1-a_n h_{k,n} b_{k,n}|^2\tilde{\sigma}_{k,n}^2
+|a_n|^2\sigma_w^2
+\sum_{k=1}^{K}\lambda_k \nu_{k,n}^2|b_{k,n}|^2$ and $\boldsymbol{\lambda} = [\lambda_1,\ldots,\lambda_K]^{T} \succeq \mathbf{0}$ collects the dual variables corresponding to the individual power constraints of the users.
For a fixed $a_n$, minimizing $\Phi_n$ w.r.t. $b_{k,n}$ yields
\begin{equation}\label{eq:minimizing_b_k_n}
    |b_{k,n}^\star(\lambda_k,a_n)|
=\frac{\tilde{\sigma}_{k,n}^2 |a_n h_{k,n}|}
{\tilde{\sigma}_{k,n}^2|a_n|^2|h_{k,n}|^2+\lambda_k\nu_{k,n}^2}.
\end{equation}
Substituting \eqref{eq:minimizing_b_k_n} back $\Phi_n(a_n,{b_{k,n}};\boldsymbol\lambda)$ gives
\begin{equation*}
    \!\!\!\tilde{\sigma}_{k,n}^2|\!1\!\!-\!a_n h_{k,n} b_{k,n}^\star|^2\!\!\!+\!\!\lambda_k\nu_{k,n}^2|b_{k,n}^\star|^2
\!\!=\!\!\frac{\lambda_k\nu_{k,n}^2\tilde{\sigma}_{k,n}^2}
{\!\tilde{\sigma}_{k,n}^2|a_n|^2|h_{k,n}|^2\!\!\!+\!\!\!\lambda_k\nu_{k,n}^2\!}.
\end{equation*}
Hence, the dual term for each subcarrier is
\begin{equation*}
    \!\psi_n(\boldsymbol\lambda)\!
=\!\min_{a_n}\!\!
\underbrace{
\sum_{k=0}^{K-1}\!
\frac{\lambda_k\nu_{k,n}^2\tilde{\sigma}_{k,n}^2}
{\tilde{\sigma}_{k,n}^2|a_n|^2|h_{k,n}|^2+\lambda_k\nu_{k,n}^2}
\!+\!\!|a_n|^2\sigma_w^2
}_{\phi_n(|a_n|^2;\boldsymbol\lambda)}.
\end{equation*}
Let $r_n=|a_n|^2\ge0$,  minimizing $\phi_n(r_n;\boldsymbol\lambda)$ w.r.t. $r_n$ yields
\begin{equation}\label{eq:scalar_equation}
    \sum_{k=0}^{K-1}
\frac{\lambda_k\nu_{k,n}^2\sigma_n^4|h_{k,n}|^2}
{(\tilde{\sigma}_{k,n}^2 r_n|h_{k,n}|^2+\lambda_k\nu_{k,n}^2)^2}
=\sigma_w^2.
\end{equation}
which has a unique positive root $r_n^\star(\boldsymbol\lambda)$. Substituting it into \eqref{eq:dualP4} gives
$ g(\boldsymbol\lambda)
=\sum_{n=0}^{N-1}
\phi_n\big(r_n^\star(\boldsymbol\lambda);\boldsymbol\lambda\big)
-\sum_{k=0}^{K-1}\lambda_k P_{\mathsf c}.$
Base on \eqref{eq:scalar_equation}, \eqref{eq:minimizing_b_k_n} and \eqref{eq:power_constraint}, the dual problem can be solved by the subgradient-based methods such as the ellipsoid method \cite{boyd2004convex}.

\subsection{Decision-Optimal Case}
For the decision-optimal case, the objective is to maximize the MD of the received features across all subcarriers. The corresponding optimization problem can be formulated as
\begin{equation*}
(\mathrm{P}7) \min_{\{b_{k,n},a_n\}_{n=1}^{n=M}} \quad \sum_{n=1}^{M} \operatorname{G}_{\operatorname{min}}({\hat y}_n), \quad
\text{s.t.}  \eqref{eq:power_constraint}.
\end{equation*}

$(\mathrm{P}7)$ is also non-convex due to the coupling between ${b_{k,n}}$ across users and subcarriers. Similar to $(\mathrm{P}6)$, $(\mathrm{P}7)$ satisfies the time-sharing condition, and can thus be solved using the Lagrange duality method.
Following the same procedure as in \eqref{eq:minimizing_b_k_n} and \eqref{eq:scalar_equation}, we have
\begin{equation}\label{eq:minimizing_b_k_n_dg}
    |b_{k,n}^\star|
=\frac{|\Delta_n|^2 h_{k,n} z_n}
{\lambda_k\nu_{k,n}^2+|\Delta_n|^2\tilde{\sigma}_{k,n}^2 h_{k,n}^2 z_n^2},
\end{equation}
\begin{equation}\label{eq:consistency_condition}
    \sum_{k=1}^{K}
\frac{\lambda_k h_{k,n}^2\nu_{k,n}^2}
{(\lambda_k\nu_{k,n}^2+|\Delta_n|^2\tilde{\sigma}_{k,n}^2 h_{k,n}^2 z_n^2)^2}
=\frac{\sigma_w^2}{|\Delta_n|^2},
\end{equation}
where $z_n\triangleq \frac{\sum_{k=1}^{K}|h_{k,n} b_{k,n}|}{\tilde{\sigma}_{k,n}^2\sum_{k=1}^{K}|h_{k,n} b_{k,n}|^2+\sigma_w^2}$. Then, the outer-layer variables ${\lambda_k}$ are updated to satisfy the power constraints in \eqref{eq:power_constraint} based on \eqref{eq:minimizing_b_k_n_dg}, while the inner layer solves for the optimal $z_n^\star$ on each subcarrier according to \eqref{eq:consistency_condition}.

\begin{figure*}[!th]
	\centering
        \begin{minipage}{0.32\textwidth}
		{\includegraphics[width=\textwidth]
         {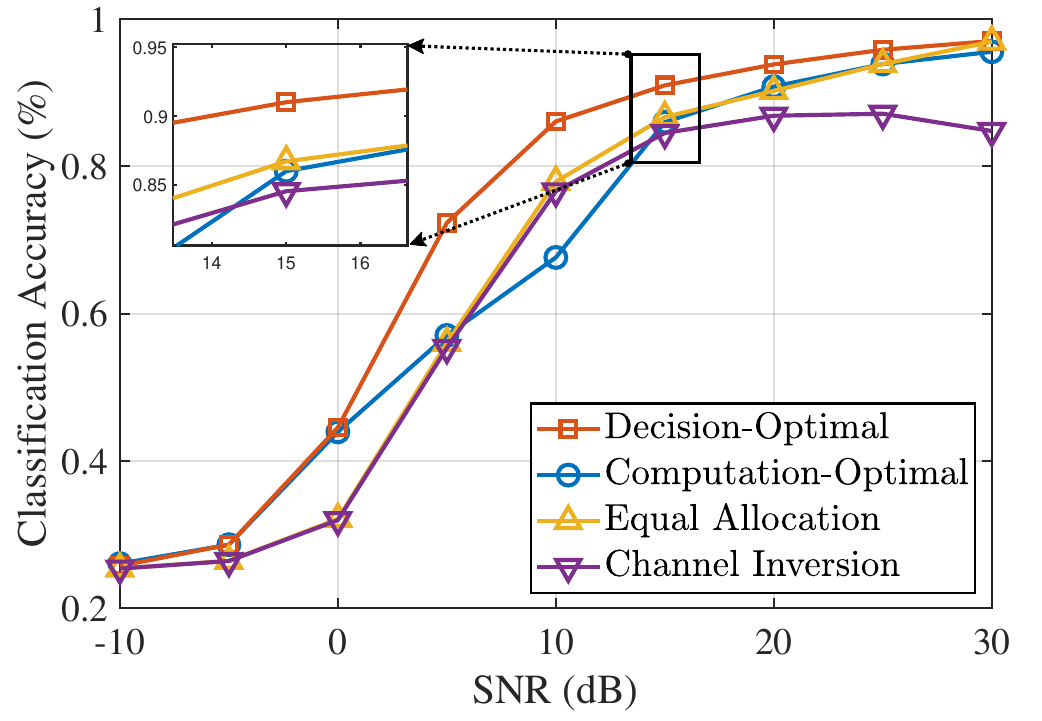}}
		\caption*{(a)}\label{Fig_P1_acc_snr}
	\end{minipage}
            \begin{minipage}{0.32\textwidth}
		{\includegraphics[width=\textwidth]
         {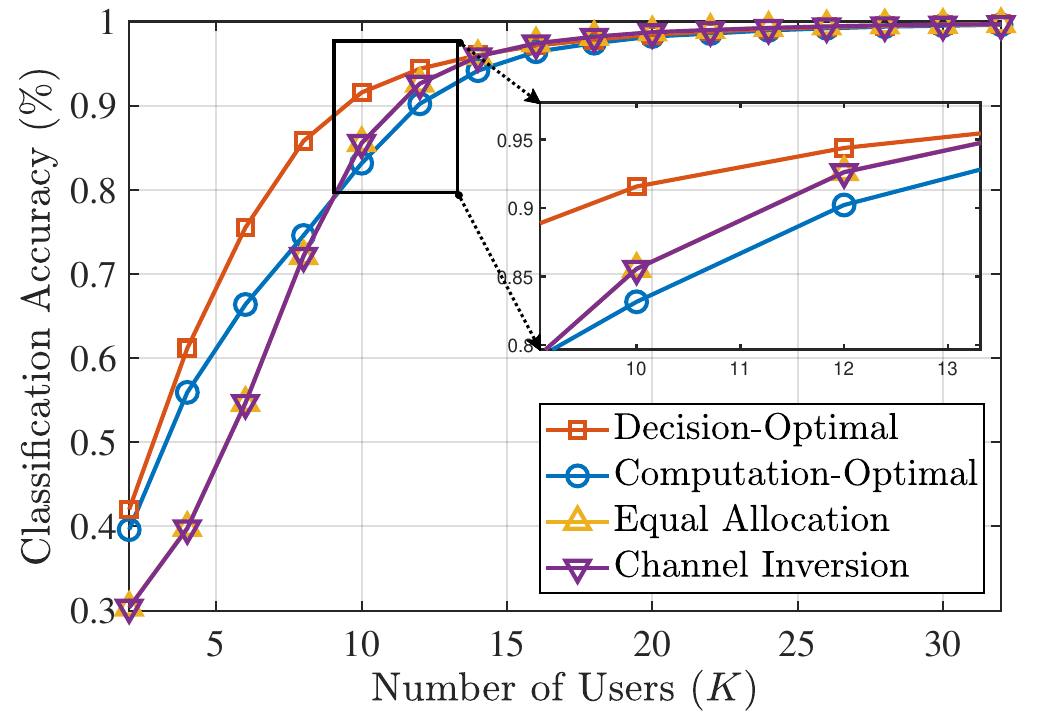}}
		\caption*{(b)}\label{Fig_P1_acc_k}
	\end{minipage}
                \begin{minipage}{0.32\textwidth}
		{\includegraphics[width=\textwidth]
         {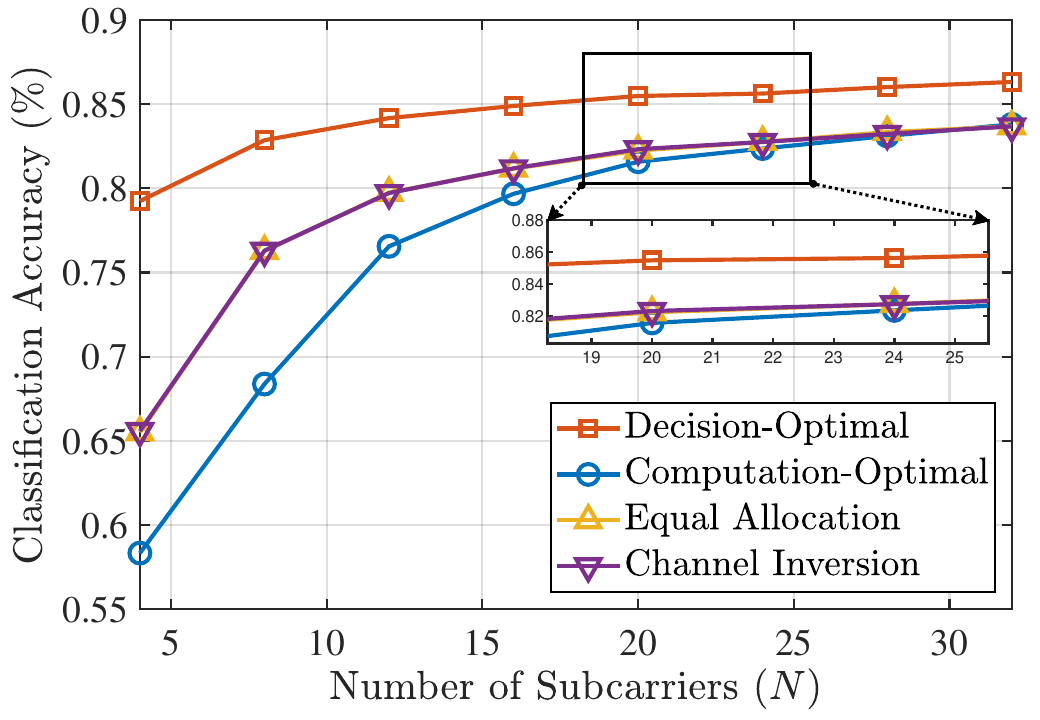}}
		\caption*{(c)}\label{Fig_P1_acc_n}
	\end{minipage}
\caption{Inference performance comparison under different system parameters: (a) varying $\mathrm{SNR}$, (b) varying number of users $K$, and (c) varying number of subcarriers $N$.}\label{Fig_results}
\vspace{-0.25cm}
\end{figure*}

\section{Numerical Results} \label{sec:results}

In this section, numerical results are presented to validate the analytical results. Besides the computation-optimal and decision-optimal cases, two baselines are considered: (1) \emph{Equal Allocation}, with uniform power allocation across all subcarriers, and (2) \emph{Channel Inversion}, with $|b_{k,n}| =\{\sqrt{\frac{P_k}{N\nu_{k,n}^2}},\ \frac{1}{|h_{k,n}|}\}$. We adopt a \emph{support vector machine} (SVM) as the edge AI model, which is trained on noise-free PCA features extracted from a human posture recognition dataset \cite{wen2023task}. 
The task is formulated as a four-class human posture classification problem. Key system parameters are configured as follows: $\sigma_w^2 = 0.1$, $K = 3$, $P \in [-10, 30]$ dB, $M = 4$, and $N = 32$.


In Fig.~\ref{Fig_results}(a), the inference performance is evaluated across a range of \emph{signal-to-noise ratio} (SNR) levels. Three key observations can be made. First, at extremely low SNR levels, the classification performance of all schemes degrades to random guessing, i.e., $\tfrac{1}{|\mathcal{L}|}$. Second, at high SNR levels, all schemes except \emph{Channel Inversion} approach the saturation upper bound. This is because \emph{Channel Inversion} attempts to compensate all subcarriers, but the total power constraint across all subcarriers limits its effectiveness. Third, in the moderate-SNR regime (i.e., $[0,15]$ dB), the computation-optimal scheme yields superior inference accuracy, benefiting from discriminant-aware and non-uniform resource allocation across feature dimensions. In Fig.~\ref{Fig_results}(b), the inference performance is evaluated under varying numbers of users $K$. The results validate Theorem~\ref{theorem:Decision-Optimal_Proxy}, showing that with sufficient independent observations in aggregation, the class uncertainty decreases accordingly. Moreover, as $K$ increases, all schemes converge to a common performance upper bound. This is because all schemes optimize each user independently, and no power coupling exists among users. In Fig.~\ref{Fig_results}(c), the inference performance under different numbers of subcarriers $N$ is presented. The performance gap between the decision-optimal scheme and the other schemes gradually narrows as $N$ increases. This is because the frequency-domain heterogeneity of the channel diminishes when $N$ becomes large, causing all schemes to approach the same time-sharing bound. However, a constant performance gap remains, since the decision-optimal design allocates more power to discriminative subcarriers with larger $|\Delta_n|^2 / \tilde{\sigma}_{k,n}^2$, whereas the computation-optimal design follows variance-based regularization. 

\section{Conclusion} \label{sec:conclusion}
This paper have investigated a multi-user ISEA system enabled by AirComp. Two theoretical proxies were introduced: the computation-optimal and decision-optimal proxies. Optimal transceiver designs in terms of closed-form power allocation were derived for both TDM and FDM settings, revealing threshold-based and dual-decomposition structures that jointly balance feature importance, channel conditions, and power budgets. Results show that the computation-optimal and decision-optimal designs coincide under quasi-static channels but diverge under frequency-selective conditions, where discriminant-aware allocation yields additional inference gains.


\bibliographystyle{IEEEtran}
\bibliography{reference/mybib}

\begin{thebibliography}{10}
\providecommand{\url}[1]{#1}
\csname url@samestyle\endcsname
\providecommand{\newblock}{\relax}
\providecommand{\bibinfo}[2]{#2}
\providecommand{\BIBentrySTDinterwordspacing}{\spaceskip=0pt\relax}
\providecommand{\BIBentryALTinterwordstretchfactor}{4}
\providecommand{\BIBentryALTinterwordspacing}{\spaceskip=\fontdimen2\font plus
\BIBentryALTinterwordstretchfactor\fontdimen3\font minus \fontdimen4\font\relax}
\providecommand{\BIBforeignlanguage}[2]{{%
\expandafter\ifx\csname l@#1\endcsname\relax
\typeout{** WARNING: IEEEtran.bst: No hyphenation pattern has been}%
\typeout{** loaded for the language `#1'. Using the pattern for}%
\typeout{** the default language instead.}%
\else
\language=\csname l@#1\endcsname
\fi
#2}}
\providecommand{\BIBdecl}{\relax}
\BIBdecl

\bibitem{liu2022integrated}
F.~Liu, Y.~Cui, C.~Masouros, J.~Xu, T.~X. Han, Y.~C. Eldar, and S.~Buzzi, ``Integrated sensing and communications: Toward dual-functional wireless networks for 6g and beyond,'' \emph{{IEEE} J. Sel. Areas Commun.}, vol.~40, no.~6, pp. 1728--1767, 2022.

\bibitem{liu2025integrated}
Z.~Liu, X.~Chen, H.~Wu, Z.~Wang, X.~Chen, D.~Niyato, and K.~Huang, ``Integrated sensing and edge {AI}: {Realizing} intelligent perception in {6G},'' \emph{IEEE Comm. Surv. Tutor.}, vol.~28, pp. 2725--2770, 2026.

\bibitem{yao2025energy}
J.~Yao, W.~Xu, G.~Zhu, K.~Huang, and S.~Cui, ``Energy-efficient edge inference in integrated sensing, communication, and computation networks,'' \emph{{IEEE} J. Sel. Areas Commun.}, early access, May 2025, doi: 10.1109/JSAC.2025.3574612.

\bibitem{he2023integrated}
Y.~He, G.~Yu, Y.~Cai, and H.~Luo, ``Integrated sensing, computation, and communication: System framework and performance optimization,'' \emph{IEEE Trans. Wireless Commun.}, vol.~23, no.~2, pp. 1114--1128, 2023.

\bibitem{feres2023over}
C.~Feres, B.~C. Levy, and Z.~Ding, ``Over-the-air multisensor collaboration for resource efficient joint detection,'' \emph{IEEE Trans. Signal Process.}, vol.~72, pp. 384--399, 2023.

\bibitem{wen2023task}
D.~Wen, X.~Jiao, P.~Liu, G.~Zhu, Y.~Shi, and K.~Huang, ``Task-oriented over-the-air computation for multi-device edge {AI},'' \emph{IEEE Trans. Wireless Commun.}, vol.~23, no.~3, pp. 2039--2053, 2023.

\bibitem{chen2024view}
X.~Chen, K.~B. Letaief, and K.~Huang, ``On the view-and-channel aggregation gain in integrated sensing and edge {AI},'' \emph{IEEE J. Sel. Areas Commun.}, vol.~42, no.~9, pp. 2292--2305, 2024.

\bibitem{csahin2023survey}
A.~{\c{S}}ahin and R.~Yang, ``A survey on over-the-air computation,'' \emph{IEEE Comm. Surv. Tutor.}, vol.~25, no.~3, pp. 1877--1908, 2023.

\bibitem{cao2020optimized}
X.~Cao, G.~Zhu, J.~Xu, and K.~Huang, ``Optimized power control for over-the-air computation in fading channels,'' \emph{IEEE Trans. Wireless Commun.}, vol.~19, no.~11, pp. 7498--7513, 2020.

\bibitem{liu2020over}
W.~Liu, X.~Zang, Y.~Li, and B.~Vucetic, ``Over-the-air computation systems: Optimization, analysis and scaling laws,'' \emph{IEEE Trans. Wireless Commun.}, vol.~19, no.~8, pp. 5488--5502, 2020.

\bibitem{liu2023over}
Z.~Liu, Q.~Lan, A.~E. Kal{\o}r, P.~Popovski, and K.~Huang, ``Over-the-air multi-view pooling for distributed sensing,'' \emph{IEEE Trans. Wireless Commun.}, vol.~23, no.~7, pp. 7652--7667, 2023.

\bibitem{bishop2006pattern}
C.~M. Bishop and N.~M. Nasrabadi, \emph{Pattern recognition and machine learning}.\hskip 1em plus 0.5em minus 0.4em\relax New York, USA: Springer Science \& Business Media, 2006.

\bibitem{tse2005fundamentals}
D.~Tse and P.~Viswanath, \emph{Fundamentals of wireless communication}.\hskip 1em plus 0.5em minus 0.4em\relax Cambridge, U.K.: Cambridge Univ. Press, 2005.

\bibitem{carleial2003interference}
A.~Carleial, ``Interference channels,'' \emph{IEEE Trans. Inf. Theory}, vol.~24, no.~1, pp. 60--70, 2003.

\bibitem{sokolic2017robust}
J.~Sokoli{\'c}, R.~Giryes, G.~Sapiro, and M.~R. Rodrigues, ``Robust large margin deep neural networks,'' \emph{IEEE Trans. Signal Process.}, vol.~65, no.~16, pp. 4265--4280, 2017.

\bibitem{kanaya1995asymptotics}
F.~Kanaya \emph{et~al.}, ``The asymptotics of posterior entropy and error probability for {B}ayesian estimation,'' \emph{IEEE Trans. Inf. Theory}, vol.~41, no.~6, pp. 1988--1992, 1995.

\bibitem{fukunaga2013introduction}
K.~Fukunaga, \emph{Introduction to statistical pattern recognition}.\hskip 1em plus 0.5em minus 0.4em\relax Amsterdam, The Netherlands: Elsevier, 2013.

\bibitem{boyd2004convex}
S.~P. Boyd and L.~Vandenberghe, \emph{Convex optimization}.\hskip 1em plus 0.5em minus 0.4em\relax Cambridge, U.K.: Cambridge Univ. Press, 2004.

\bibitem{yu2006dual}
W.~Yu and R.~Lui, ``Dual methods for nonconvex spectrum optimization of multicarrier systems,'' \emph{IEEE Trans. Wireless Commun.}, vol.~54, no.~7, pp. 1310--1322, 2006.

\end{thebibliography}

\end{document}